\title{An enriched mixture model for functional clustering}
\author[1]{Tommaso Rigon}
\affil[1]{Department of Decision Sciences, Bocconi University, Milan}
\date{}
\newtheorem{theorem}{Theorem}
\newtheorem{corollary}{Corollary}
\theoremstyle{definition}
\begin{document}
\maketitle







\begin{abstract} There is an increasingly rich literature about Bayesian nonparametric models for clustering functional observations. However, most of the recent proposals rely on infinite-dimensional characterizations that might lead to overly complex cluster solutions. In addition, while prior knowledge about the functional shapes is typically available, its practical exploitation might be a difficult modeling task. Motivated by an application in e-commerce, we propose a novel enriched Dirichlet mixture model for functional data. Our proposal accommodates the incorporation of functional constraints while bounding the model complexity. To clarify the underlying partition mechanism, we characterize the prior process through a P\'olya urn scheme. These features lead to a very interpretable clustering method compared to available techniques. To overcome computational bottlenecks, we employ a variational Bayes approximation for tractable posterior inference. 
\end{abstract}

\section{Introduction}\label{sec:1}
A private company selling flight tickets is interested in understanding the preferences and the needs of its customers, to implement effective marketing strategies and to provide tailored solutions to its clients. In this specific industry, a major goal is to assess the interests of customers towards each flight route, which represents the functional unit in our analysis. The involved number of flight routes is quite large and therefore route-specific marketing actions are practically unfeasible, since they would require massive human interventions. A possible solution is to consider groups (clusters) of similar routes to allow the development of cluster-specific policies which have an impact on homogeneous segments of the market. Such a strategy is highly effective as long as the number of clusters is limited and the obtained groups have a clear interpretation. Indeed, an overly complex clustering solution would be of little practical interest in our setting, regardless the fact that it might constitute a better fit for the data. The enriched mixture model we propose is specifically designed to address this business requirement.

The entries of the dataset at our disposal are the number of times that each route has been searched on the company's website,  comprising a collection of weekly counts for each flight route. These longitudinal measurements are characterized by relevant temporal patterns that can be exploited to produce a finer partition of the market, compared to approaches based on static indicators. This is immediately evident from Figure~\ref{fig:1}, where the smoothed trajectories of two different routes are depicted. However, note that in our specific application we will work with standardized functional observations and not with the raw data of Figure~\ref{fig:1}. In fact,  we are interested in grouping functions with similar shapes and not in capturing their average levels. 

\begin{figure}[tbp]
\includegraphics[width=\textwidth]{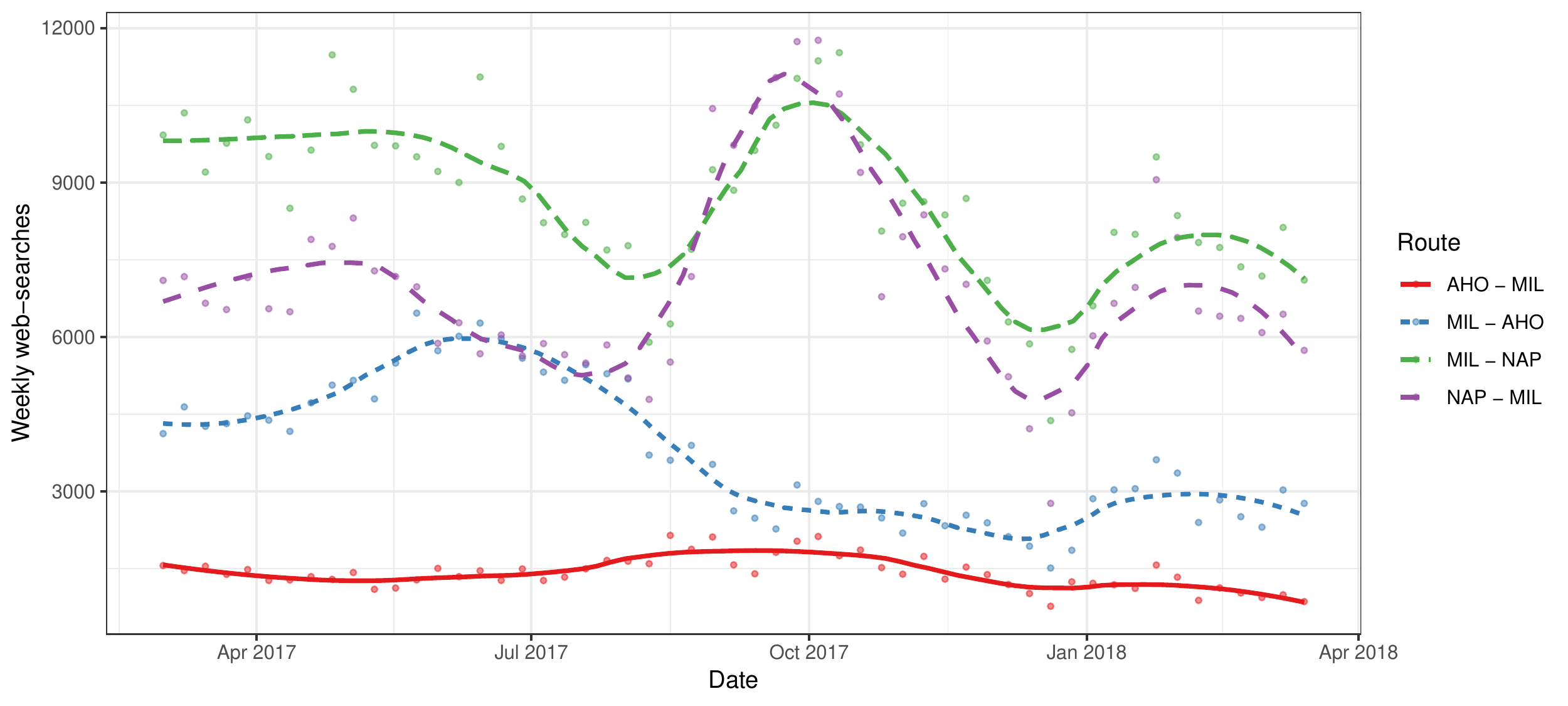}
\caption{Number of the web searches on an Italian website in the period between March 2017 and March 2018. The origin and the destination of each route are coded as follows: \texttt{MIL} = Milan, \texttt{NAP} = Naples, \texttt{AHO} =  Alghero. Smoothed trajectories are obtained using a nonparametrics \texttt{loess} estimate. \label{fig:1}}
\end{figure}
From a modeling perspective, we are given a collection of functional observations---one for each flight route---and we aim at partitioning them into groups. Direct application of classical procedures like \emph{k-means} or agglomerative methods seems inappropriate here. For example, they would disregard the temporal dimension and thus they would not take advantage from the functional structure of the data. Moreover, when the trajectories are observed on different time grids, or in presence of missing data, these tools cannot be employed. These considerations fostered the development of clustering procedures specifically designed for functional observations, see for instance \citet{Abraham2003, James2003, Serban2005} and references therein.

Let us assume that the route-specific measurements $y_i(t)$ can be regarded as error-prone realizations of unknown functions~$f_i(t)$, for each route $i=1,\dots,n$, and time value $t \in \mathbb{R}^+$, that is
\begin{equation}
\label{additive}
y_i(t)  = f_i(t) + \epsilon_i(t), \qquad i=1,\dots,n,
\end{equation}
with $\epsilon_i(t)$ denoting a random noise term, independent over flight routes and time. The additive specification~\eqref{additive} customarily serves as starting point in functional data analysis \citep{Ramsay2005}. Then, one could model the latent functions $f_i(t)$ separately using B-splines and subsequently grouping them using a \emph{k-means} algorithm on the regression coefficients \citep{Abraham2003}. Although such an approach is appealing because of its simplicity, it can not borrow strength across curves. Indeed, trajectories belonging to the same cluster are expected to behave similarly and therefore we should not discard this information from the analysis. In addition, with the \emph{k-means} approach one can not easily incorporate prior information on the functional shapes, which is indeed available in our setting. 

A natural way to fulfill the above requirements is through Bayesian mixtures. Functional clustering via finite mixtures have been provably effective in applications \citep[e.g.][]{Heard2006}, but question remains on the choice of mixture components, i.e. the number of clusters. 
A possible solution is to rely on Bayesian nonparametric priors, and one may follow \citet{Bigelow2009} who proposed a spline formulation for each $f_i$ together with the Dirichlet process prior of \citet{Ferguson1973} for the associated regression coefficients. Similarly, \citet{Ray2006} adopted the Dirichlet process in conjunction with wavelets. The resulting process is called functional Dirichlet process (\textsc{fdp}). In short, the \textsc{dp} prior induces a latent partition structure among the unknown functions $f_i$, while leaving unbounded the number of clusters, which increases logarithmically as $n$ grows. In \citet{Dunson2008} such a model has been employed for joint modeling of functional observations with a response variable, whereas in \citet{Petrone2009} a hybrid \textsc{fdp} is proposed, allowing realizations of $f_i(t)$ to share atoms in different local regions. Finally, refer also to \citet{Rodriguez2014} for the description of a functional generalized Dirichlet process model in nested designs. 

Although the latter methods enable flexible clustering and they are excellent tools for density estimation, their practical usage might be limited here. Indeed, the employment of a model with an unbounded number of groups might undermine the original goal, namely providing small dimensional summaries of flight routes.  Furthermore, all the above models seem to rely too much on data while ignoring accumulated knowledge from past analyses. For example, it is known that some flight routes are characterized by a strong cyclical component, e.g. the one depicted in Figure~\ref{fig:1}, and one may want to include this aspect in the model. The latter remark motivated \citet{Scarpa2009} to propose a contaminated \textsc{fdp} accounting for parametric functional specifications. Such an approach was then extended and theoretically investigated by \citet{Canale2017} in the more general Pitman--Yor case.

To overcome all the above limitations we propose an enriched functional Dirichlet multinomial process (\textsc{e-fdmp}), which has a \emph{bounded complexity} in terms of number of clusters and can easily incorporate prior knowledge about functional shapes.  
We will show that the proposed model converges to the enriched class of functional Dirichlet processes  (\textsc{e-fdp}) presented in \citet{Scarpa2014}, when the number of clusters is allowed to be infinite, while being also reminiscent of the enriched Dirichlet process of \citet{Wade2011}. Specifically, the underlying clustering mechanism can be described in terms of a two-step enriched urn-scheme, extending the well-know \citet{Blackwell1973} P\'olya urn. Such a theoretical development clarifies the interpretation of the involved random partition and it is helpful in the practical specification of the hyperparameters. 

The paper is organized as follows. Section~\ref{sec:2} introduces the enriched mixture model and Section~\ref{sec:3} discusses its enriched clustering mechanism. In Section~\ref{sec:4} a variational Bayes algorithm for posterior inference is developed and it is tested on a simulation study in Section~\ref{sec:5}. In Section~\ref{sec:6} we apply the proposed method to a real dataset from e-commerce.

\section{A Bayesian functional mixture model}\label{sec:2}

In the additive representation~\eqref{additive} we consider standardized functional observations. That is, the empirical mean of $y_i(t)$ evaluated on the time grid $\bm{t}_i = (t_{i1},\dots,t_{iT_i})^\intercal$ for $i=1,\dots,n$, equals zero, whereas the empirical variance equals one.  Then, for each standardized route and time value $t \in \mathbb{R}^+$, we let
\begin{equation*}
y_i(t)  = f_i(t) + \epsilon_i(t), \qquad i=1,\dots,n,
\end{equation*}
where each $f_i(t): \mathbb{R}^+ \rightarrow \mathbb{R}$ is an unknown function to be estimated, and where $\epsilon_i(t)$ is a Gaussian local error measurement with zero mean and variance~$\sigma^2$, in turns having a conditionally conjugate gamma prior distribution $\sigma^{-2} \sim \textsc{ga}(a_\sigma,b_\sigma)$. Consistent with the discussion of Section~\ref{sec:1}, we employ a discrete prior law $\tilde{p}$ to borrow information across  the latent trajectories $f_i(t)$ and to induce functional clustering, namely we assume
\begin{equation}\label{model}
(f_i(t) \mid \tilde{p}) \overset{\text{iid}}{\sim} \tilde{p}, \qquad \tilde{p} = \sum_{h=1}^H\xi_h\delta_{\phi_h(t)},
\end{equation}
independently for $i=1,\dots,n$, with $\delta_x$ denoting the point mass function at $x$. The collections of weights $\xi_1,\dots,\xi_H$ are random probabilities such that $\sum_{h=1}^H \xi_h=1$ almost surely, whereas each atom $\phi_h(t)$ is the realization of a random function. Hence, each $f_i(t)$ can be formally regarded as a random function belonging to a suitable complete and separable metric space $\mathbb{F}$ endowed with its Borel $\sigma$-algebra~$\mathscr{F}$. From representation \eqref{model} it is evident that a \emph{discrete} prior induces ties among the functions $f_i(t)$. We will say that two different functional observations $y_i(t)$ and $y_j(t)$ belong to same group whenever they possess the same functional atom $\phi_h(t)$, i.e. when they share the same latent trajectory $f_i(t) = f_j(t)$. Clearly, the choice of the prior law for $\tilde{p}$ has a strong impact on the clustering procedure. A popular class of models, arising in the infinite case $H \rightarrow \infty$, is given by stick-breaking priors \citep{Ishwaran2001}, of which the functional Dirichlet process (\textsc{fdp}) is a special case. However, as discussed in the Introduction and detailed in Section~\ref{sec:3}, such a choice might be unsuitable for our goals, and we rather want to upper-bound the model complexity by selecting a finite value for~$H$.  Furthermore, we aim at adapting~\eqref{model} to incorporate prior information about functional shapes. 

Suppose it is known that each $f_i(t)$ possesses specific shapes or features. For example, we may know in advance that a subset of the functional observations $f_i(t)$ is  monotone, cyclical or it is bounded by some constant. In our application, for instance, we know that a subset of routes presents a strong cyclical pattern. More formally, we assume that each function $f_i(t)$ belongs to a functional class among a finite collection $\{\mathbb{F}_1,\dots,\mathbb{F}_L\}$ of $L$ specifications, with each $\mathbb{F}_\ell \in \mathscr{F}$ being a measurable subset of $\mathbb{F}$. These functional classes have to be specified in consultation with subject matter experts or as a consequence of exploratory analyses. For example, one might want to consider either increasing, positive, periodical functions---or even \emph{biphasic} trajectories as in \citet{Scarpa2009}. Splines are particularly convenient in accommodating a variety of constraints such as monotonicity \citep{Ramsay1988}, but there are endless modeling possibilities. For instance, Gaussian processes are a flexible and widely used prior for functional modeling \citep[e.g.][]{Petrone2009}, and one may select for each class a different covariance function. A computationally convenient class of functions which includes the aforementioned examples is discussed in Section~\ref{sec:21}.

Let $P_\ell$ for $\ell=1,\dots,L$ be a collection of \emph{diffuse} probability measures defined over the space $(\mathbb{F}, \mathscr{F})$ and placing mass only on the corresponding class space $\mathbb{F}_\ell$, so that $P_{\ell}(\mathbb{F}_\ell) = 1$. The diffuseness assumption amount to have $P_\ell(\{f\}) = 0$ for any $f\in \mathbb{F}$. Then, our enriched formulation specializes the general model~\eqref{model} as follow
\begin{equation}\label{enriched_model}
\begin{aligned}
\tilde{p} &= \sum_{\ell=1}^L \Pi_\ell \sum_{h=1}^{H_\ell}\pi_{\ell h} \delta_{\theta_{\ell h}(t)}, \theta_{\ell h}(t) &\overset{\text{ind}}{\sim} P_\ell, \qquad h=1,\dots,H_\ell, \qquad \ell=1,\dots,L.
\end{aligned}
\end{equation}
Such a construction can be readily interpreted as a \emph{mixture of mixtures}. Differently from common mixture models, the atoms $\theta_{\ell h}(t)$ are independent and identically distributed (iid) within the feature class, but only independent across them. Exploiting standard hierarchical representation for mixture models, let us introduce a set of latent cluster indicators $\bm{G} = (G_1,\dots,G_n)$ whose values are the pairs $(\ell, h)$ for any $h=1,\dots,H_\ell$ and $\ell=1,\dots,L$, so that each function $f_i(t)$ is associated to the corresponding atom $\theta_{G_i}(t)$. Therefore, two functional observations $f_i(t)$ and $f_j(t)$ belong to the same cluster if and only if $G_i = G_j$. Moreover, let us define an additional set of latent indicators $F_i \in \{1,\dots,L\}$, for $i=1,\dots,n$, representing the membership of each $f_i(t)$ to the corresponding functional class. Then, the mixing probabilities in~\eqref{enriched_model} have a simple and useful interpretation, which is outlined in the following scheme:
\begin{equation*}
\begin{aligned}
\emph{Functional class allocation:}\quad &\mathbb{P}(F_i = \ell) = \Pi_\ell, \\
\emph{Within-class allocation:}\quad &\mathbb{P}(G_i = (\ell,h) \mid F_i = \ell) = \pi_{\ell h}, \ &&h=1,\dots,H_\ell,\\
\emph{Cluster allocation:}\quad &\mathbb{P}(G_i = (\ell,h)) = \Pi_\ell \pi_{\ell h}, \ &&h=1,\dots,H_\ell,\\
\end{aligned}
\end{equation*}
for any $\ell=1,\dots,L$ and unit $i=1,\dots, n$. To summarize, each membership indicator $G_i$ might be obtained as the result of a two-step procedure. In the first step, the functional class indicator $F_i$ associated to the $i$th unit is sampled according to the probabilities $\bm{\Pi}=(\Pi_1,\dots,\Pi_L)$. Then, conditionally on $F_i = \ell$, each cluster membership $G_i$ is drawn according to the within-class probabilities $\bm{\pi}_\ell = (\pi_{\ell 1},\dots,\pi_{\ell H_\ell})$. To allow uncertainty in such probabilities, we let
\begin{equation}\label{functional_class}
(\Pi_1,\dots,\Pi_{L-1}) \sim \textsc{dirichlet}(\alpha_1,\dots,\alpha_L),
\end{equation}
whereas for the within-class step we let
\begin{equation}\label{within_class}
(\pi_{\ell 1},\dots,\pi_{\ell H_\ell-1}) \overset{\text{ind}}{\sim} \textsc{dirichlet}\left(\frac{c_\ell}{H_\ell},\dots, \frac{c_\ell}{H_\ell}\right), \qquad \ell=1,\dots,L.
\end{equation}
The Dirichlet distribution in equation~\eqref{within_class} is symmetric because the atoms $\theta_{\ell h}$ are iid  within the functional class. Altogether, equations~\eqref{enriched_model}-\eqref{within_class} describe what we will term an enriched functional Dirichlet multinomial process (\textsc{e-fdmp}). 

Such a nested clustering mechanism characterizes general enriched priors, like the \textsc{e-fdp} and other enriched stick-breaking priors \citep{Scarpa2014}. As we will show in Section~\ref{sec:3}, there is a sharp connection between the \textsc{e-fdp} and our \textsc{e-fdmp}, since the former can be recovered as limiting case of the latter.  Beside constituting a more flexible class compared to classical mixtures, enriched processes allow the estimation of ``groups of clusters'', which are identified by the functional class indicators $F_i$.  Indeed, we might want to group the routes characterized by cyclical patterns or increasing trends, irrespectively of their within-class allocation. Moreover, even when the $G_i$ indicators are of interests, it might be useful to split the clustering solution into homogeneous classes, e.g. to facilitate their presentation to the stakeholders. These are major interpretative advantages of enriched priors which do not have a direct equivalent in classical mixture models.

\subsection{Baseline measures specification}\label{sec:21}

The specification of the baseline measures $P_\ell$ has clearly a crucial impact on inference. A priori, each $P_\ell$ can be interpreted as a ``functional prior guess'', because the expected value of $\tilde{p}$ is a mixture of the baseline measures $P_1,\dots,P_L$. Indeed, for any $A \in \mathscr{F}$
\begin{equation*}
\mathbb{E}\{\tilde{p}(A)\} = \sum_{\ell=1}^L \mathbb{E}(\Pi_\ell) P_\ell(A) = \frac{1}{\alpha} \sum_{\ell=1}^L \alpha_\ell P_\ell(A), \qquad \alpha = \sum_{\ell=1}^L\alpha_\ell.
\end{equation*}
The role of the hyperparameters $\alpha_1/\alpha,\dots,\alpha_L/\alpha$ is hence clear, being the prior proportions of each mixture component. For the remaining of the paper, we will focus on a broad subclass of baseline probability measures which are characterized by a significantly improved computational and analytical tractability. More precisely, we assume that $\theta_{\ell h}(t)$ is linear in the parameters, with a Gaussian prior on the regression coefficients, namely
\begin{equation}\label{linear_beta}
\theta_{\ell h}(t) = \sum_{m=1}^{M_\ell} \mathcal{B}_{m \ell}(t) \beta_{m\ell h}, \quad \bm{\beta}_{\ell h} = (\beta_{1\ell h},\dots,\beta_{M_\ell \ell h})^\intercal \overset{\text{ind}}{\sim} \mathcal{N}_{M_\ell}(\bm{\mu}_\ell, \bm{\Sigma}_\ell),
\end{equation}
where each $\mathcal{B}_{1\ell}(t),\dots,\mathcal{B}_{M_\ell \ell}(t)$ for $\ell=1,\dots,L$ is a set of pre-specified basis functions and where $\bm{\beta}_{\ell h} \in \mathbb{R}^{M_\ell}$ is an unknown vector of regression coefficients having multivariate Gaussian prior with mean $\bm{\mu}_\ell = (\mu_{1\ell},\dots,\mu_{M_\ell \ell})^\intercal$ and covariance matrix $\bm{\Sigma}_\ell$. Polynomials and splines might be used as basis functions, but the modeling possibilities are not confined to such a choice. For example, in our application we will employ trigonometric functions in combination with splines to capture perturbed cyclical patterns. Note that Bayesian penalized splines \citep{Lang2004} also fall within specification~\eqref{linear_beta}. Note that the a priori expected value of each function $f_i(t)$ for $i=1,\dots,n$ and $t\in \mathbb{R}^+$ simplifies, so that
\begin{equation*}
\mathbb{E}\{f_i(t)\} =   \sum_{\ell=1}^L \frac{\alpha_\ell}{\alpha} \sum_{m=1}^{M_\ell}  \mathcal{B}_{m\ell}(t) \mu_{m \ell},
\end{equation*} 
thus being a weighted average of the expected values of the regression coefficients. 
We shall remark that if inference on the functional classes $F_1,\dots,F_n$ is of interest, the measures $P_1,\dots,P_L$ must be distinguishable a priori, in the sense that they should characterize to quite different functional shapes. Otherwise, it might be difficult to infer the functional classes from the data. Indeed, while very flexible specifications might be employed for each $P_\ell$, these choices would lead to identifiability issues across functional classes. However, this is not a concern if one is interested in the cluster memberships $G_1,\dots,G_n$ and does not need to investigate also the class indicators $F_1,\dots,F_n$. 

\section{Random partitions and clustering}\label{sec:3}

In this section we investigate the a priori random partition mechanism of the \textsc{e-fdmp} model. Our proposal can be viewed as a middle ground between finite and infinite mixture models. Indeed, it is closely related to proper nonparametric priors while being finite dimensional. These features have several important implications for clustering. 

A key property of the \textsc{e-fdmp} model is that the number of clusters is bounded by $H = \sum_{\ell=1}^LH_\ell$. However, this does not imply that the actual number of clusters is equal to $H$, because some partitions might be empty. Indeed, to circumvent the issue of selecting the number of mixture components, one might consider a mixture model with a large $H$ and employ a sparse prior, thus effectively deleting the redundant mixture weights. Such an approach has been advocated by \citet{Malsiner2016}, on the ground of the asymptotic results of \citet{Rousseau2011}. The amount of shrinkage towards the upper bound $H$ or towards the single cluster solution is regulated by the sparse prior~\eqref{within_class}. Hence, the \textsc{e-fdmp} should not be regarded as a classical finite mixture model, because the number of clusters is inferred from the data and it should not be specified in advance. 

We begin our discussion by first pointing out relevant connections of our proposal with both the \textsc{e-fdp} and the \textsc{fdp} processes, and by providing some first intuitions about the role of each~$H_\ell$.  Consider the probability that two functions are assigned to the same cluster. More precisely, let $f_i$ and $f_j$ be two draws from a \textsc{e-fdmp} with $i\neq j$, then it is easy to check that  a priori
\begin{equation}\label{cocluster}
\mathbb{P}(f_i = f_j) = \sum_{\ell=1}^L \frac{\alpha_\ell(\alpha_\ell+1)}{\alpha(\alpha+1)}\frac{c_\ell + H_\ell}{c_\ell H_\ell + H_\ell}.
\end{equation}
The a priori probability of co-clustering of equation~\eqref{cocluster} is decreasing over $H_\ell$, i.e. the within-class upper bounds, and increasing over $c_\ell$, the within-class total mass parameter. Importantly, as each $H_\ell \rightarrow \infty$ for $\ell=1,\dots,L$, the probability of co-clustering converges to a strictly positive constant
\begin{equation*}
\lim_{H_\ell \rightarrow\infty}\mathbb{P}(f_i = f_j) = \sum_{\ell=1}^L \frac{\alpha_\ell(\alpha_\ell+1)}{\alpha(\alpha+1)}\frac{1}{1 + c_\ell},
\end{equation*}
which coincides with the co-clustering probability of the \textsc{e-fdp}, given in \citet{Scarpa2014}. Indeed, one can show that a \textsc{e-fdmp} (weakly) converges to a \textsc{e-fdp} as each $H_\ell\rightarrow \infty$. This convergence result has relevant practical implications: broadly speaking, it means that if we augment the model complexity indefinitely by increasing $H_\ell$, we nonetheless obtain a well-defined model, whose probability of co-clustering does not goes to zero. However, this is not to say that we should choose $H_\ell$ as large as possible, because this might lead to uninterpretable clustering solutions. Rather, the bounds $H_\ell$ should be selected as the largest value maintaining the model sufficiently tractable.  

We now provide a formal statement of the aforementioned convergence result, which rely on the notion of weak convergence for random measures; we refer to \citet[Chap. 4,][]{Kallenberg2017} for a rigorous treatment. Let $\tilde{q} \sim \textsc{dp}(c P)$ denote a Dirichlet process having total mass parameter $c$ and baseline probability distribution $P$ \citep{Ferguson1973}.

\begin{theorem}\label{teo1} Let $\tilde{p}$ be a \textsc{e-fdmp} defined by equations~\eqref{enriched_model}-\eqref{within_class} and let $\tilde{p}_\infty$ be a \textsc{e-fdp} \citep{Scarpa2014}, which is defined as
\begin{equation*}
\tilde{p}_\infty= \sum_{\ell=1}^L \Pi_\ell \tilde{q}_\ell,\qquad \tilde{q}_\ell \overset{\textup{ind}}{\sim} \textsc{dp}(c_\ell P_\ell),
\end{equation*}
where the probabilities $(\Pi_1,\dots,\Pi_L)$ are distributed as in~\eqref{functional_class}. Then, 
\begin{equation*}
\tilde{p} \overset{\text{w}}{\longrightarrow} \tilde{p}_\infty, \quad \text{as} \quad H_\ell \rightarrow \infty, \quad \ell =1,\dots,L,
\end{equation*}
where $\overset{\text{w}}{\longrightarrow}$ denotes weak convergence of the whole process.
\end{theorem}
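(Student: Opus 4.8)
The plan is to exploit the hierarchical, class-wise independent structure shared by the two processes and to reduce the statement to the classical fact that a symmetric finite-dimensional Dirichlet prior approximates a Dirichlet process. First I would note that in both $\tilde{p}$ and $\tilde{p}_\infty$ the outer weights $(\Pi_1,\dots,\Pi_L)$ obey the same law~\eqref{functional_class} and are independent of the within-class ingredients. Writing $\tilde{q}_\ell^{(H_\ell)} = \sum_{h=1}^{H_\ell}\pi_{\ell h}\,\delta_{\theta_{\ell h}}$ for the within-class random measure of the \textsc{e-fdmp}, one sees that $\tilde{p}$ is the image of $(\bm{\Pi},\tilde{q}_1^{(H_1)},\dots,\tilde{q}_L^{(H_L)})$ under the mixing map $\Phi:(\bm{\Pi},\nu_1,\dots,\nu_L)\mapsto \sum_{\ell}\Pi_\ell\nu_\ell$, while $\tilde{p}_\infty$ is the image of $(\bm{\Pi},\tilde{q}_1,\dots,\tilde{q}_L)$ under the same $\Phi$. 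Since $\int g\,\mathrm{d}\Phi(\bm{\Pi},\nu_1,\dots,\nu_L)=\sum_\ell\Pi_\ell\int g\,\mathrm{d}\nu_\ell$ is jointly continuous in its arguments for every bounded continuous $g$, the map $\Phi$ is continuous for the product of weak topologies, and by the continuous mapping theorem it suffices to establish joint weak convergence of $(\bm{\Pi},\tilde{q}_1^{(H_1)},\dots,\tilde{q}_L^{(H_L)})$ to $(\bm{\Pi},\tilde{q}_1,\dots,\tilde{q}_L)$.

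Next, because the blocks $(\bm{\pi}_\ell,\{\theta_{\ell h}\}_h)$ are mutually independent across $\ell$ in the \textsc{e-fdmp}, the limits $\tilde{q}_\ell$ are mutually independent across $\ell$ in the \textsc{e-fdp}, and $\bm{\Pi}$ is independent of all of them with a fixed law, the joint convergence factorizes. Hence it is enough to prove the marginal statement
\[
\tilde{q}_\ell^{(H_\ell)}\overset{\text w}{\longrightarrow}\tilde{q}_\ell\sim\textsc{dp}(c_\ell P_\ell),\qquad H_\ell\to\infty,
\]
separately for each $\ell=1,\dots,L$, and then reassemble the independent limits together with $\bm{\Pi}$.

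The core step is this within-class convergence, which is precisely the finite Dirichlet approximation of a Dirichlet process. I would verify it through the finite-dimensional distributions over measurable partitions, the family that determines the law of a random measure. Fix a measurable partition $B_1,\dots,B_k$ of $\mathbb{F}$ and let $N_j=\#\{h\le H_\ell:\theta_{\ell h}\in B_j\}$. Conditionally on the atoms, the aggregation property of the symmetric Dirichlet law~\eqref{within_class} gives
\[
\big(\tilde{q}_\ell^{(H_\ell)}(B_1),\dots,\tilde{q}_\ell^{(H_\ell)}(B_k)\big)\sim\textsc{dirichlet}\Big(\tfrac{c_\ell}{H_\ell}N_1,\dots,\tfrac{c_\ell}{H_\ell}N_k\Big).
\]
Realizing all approximations on a common space with a single i.i.d.\ sequence $\theta_{\ell 1},\theta_{\ell 2},\dots\sim P_\ell$ and noting $N_j\sim\textsc{binomial}(H_\ell,P_\ell(B_j))$, the strong law yields $c_\ell N_j/H_\ell\to c_\ell P_\ell(B_j)$ almost surely. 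Continuity of the Dirichlet law in its concentration parameters then gives convergence of the conditional laws to $\textsc{dirichlet}(c_\ell P_\ell(B_1),\dots,c_\ell P_\ell(B_k))$, and since the test functionals are bounded a dominated-convergence argument removes the conditioning. These limits are exactly the finite-dimensional distributions of $\textsc{dp}(c_\ell P_\ell)$, which establishes the marginal weak convergence.

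The main obstacle is the final transition: passing from convergence of the partition-indexed finite-dimensional distributions to weak convergence of the full random probability measure on the infinite-dimensional space $(\mathbb{F},\mathscr{F})$. This requires the machinery of \citet[Chap.~4,][]{Kallenberg2017}, namely that evaluations on measurable sets (equivalently, integrals against bounded continuous functions) form a convergence-determining class for probability-measure-valued random elements, together with a relative-compactness check in the space of probability measures on $\mathbb{F}$. The diffuseness of each $P_\ell$ is helpful here, since it permits restricting to $P_\ell$-continuity partitions and guarantees that the atoms $\theta_{\ell h}$ are almost surely distinct, thereby avoiding degeneracies on partition boundaries.
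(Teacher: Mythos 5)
Your proof is correct, and it follows the same skeleton as the paper's: the same decomposition $\tilde{p}=\sum_{\ell=1}^L \Pi_\ell\, \tilde{q}_\ell^{(H_\ell)}$, the same reduction to the within-class limit (Dirichlet multinomial process converging to a Dirichlet process), and the same final appeal to Theorem~4.11 of \citet{Kallenberg2017} to upgrade finite-dimensional convergence to weak convergence of the whole process. The difference is one of self-containedness rather than of route. Where the paper simply cites \citet{Ishwaran2000} for the within-class convergence, you prove it from first principles: conditioning on the atoms, using the aggregation property of the symmetric Dirichlet prior~\eqref{within_class} to obtain the conditional law $\textsc{dirichlet}(c_\ell N_1/H_\ell,\dots,c_\ell N_k/H_\ell)$, applying the strong law to the binomial counts $N_j$, and removing the conditioning by dominated convergence. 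You also make explicit the gluing step that the paper compresses into the single word ``implying'': joint weak convergence of $(\bm{\Pi},\tilde{q}_1^{(H_1)},\dots,\tilde{q}_L^{(H_L)})$ follows from independence of the blocks together with the fixed law of $\bm{\Pi}$, and continuity of the mixing map $\Phi$ plus the continuous mapping theorem transfers this to $\tilde{p}$; this is cleaner than passing through the finite-dimensional distributions of the mixture itself. The paper's version buys brevity; yours buys a self-contained argument that makes clear exactly which structural facts (class-wise independence, linearity of the mixture) are being used. Two small points to tidy up: the continuity-in-parameters step should note the degenerate case $P_\ell(B_j)=0$, where the limiting coordinate collapses to a point mass at zero, consistently with the \textsc{dp} finite-dimensional law; and your parenthetical suggestion that evaluations on arbitrary measurable sets are ``equivalent'' to integrals against bounded continuous functions as a convergence-determining class is slightly loose, since Kallenberg's criterion requires sets whose boundaries are almost surely null for the limit (or the functional formulation) --- though the paper's own proof glosses over exactly the same point.
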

\begin{proof} Note that we can write $\tilde{p}= \sum_{\ell = 1}^L \Pi_\ell \tilde{p}_{H_\ell}$, where each $\tilde{p}_{H_\ell}$ follows a Dirichlet multinomial process. It is well known that $\tilde{p}_{H_\ell}$ weakly converges to a Dirichlet process $q_\ell$ \citep[e.g.][]{Ishwaran2000} as $H_\ell\rightarrow \infty$, implying that for any finite collection of sets~$A_1,\dots,A_d \in \mathscr{F}$
\begin{equation*}
\{\tilde{p}(A_1),\dots,\tilde{p}(A_d)\} \overset{\text{d}}{\longrightarrow} \{\tilde{p}_\infty(A_1),\dots,\tilde{p}_\infty(A_d)\}.
\end{equation*}
Weak convergence of the process is a consequence of  Theorem~4.11 in \citet{Kallenberg2017}. 
\end{proof}

Theorem~\ref{teo1} is important also on the light of the following connection between the \textsc{e-fdp} and the \textsc{fdp} which, to the best of our knowledge, was not made explicit elsewhere. If $L=1$, then the \textsc{e-fdp} trivially  reduces to a \textsc{fdp}. However, this occurs also under specific hyperparameter settings.  Indeed, the next corollary implies that if $\alpha_\ell = c_\ell$ for $\ell=1,\dots,L$, then the limiting process $\tilde{p}_\infty$ will be distributed according to a Dirichlet process whose baseline probability measure is a mixture of the class-specific measures $P_1,\dots,P_L$. Such a result is stated as a corollary of Theorem~\ref{teo1} for the sake of the exposition, but it is actually a property of the \textsc{e-fdp}; see the proof for details.

\begin{corollary}\label{corol1} Suppose additionally to Theorem~\ref{teo1} that $\alpha_\ell = c_\ell$ for any $\ell=1,\dots,L$. Then $\tilde{p} \overset{\text{w}}{\longrightarrow} \tilde{p}_\infty$ as each $H_\ell \rightarrow \infty$ and moreover
\begin{equation*}
\tilde{p}_\infty \sim \textsc{dp}\left(\sum_{\ell=1}^L \alpha_\ell P_\ell\right).
\end{equation*}
\end{corollary}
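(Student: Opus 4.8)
The plan is to separate the two assertions. The weak convergence $\tilde{p} \overset{\text{w}}{\longrightarrow} \tilde{p}_\infty$ is inherited verbatim from Theorem~\ref{teo1}, since the extra hypothesis $\alpha_\ell = c_\ell$ places no constraint on the limiting argument in the $H_\ell$. The substantive content is therefore purely distributional: I must show that the \textsc{e-fdp} $\tilde{p}_\infty = \sum_{\ell=1}^L \Pi_\ell \tilde{q}_\ell$, with $\tilde{q}_\ell \overset{\text{ind}}{\sim}\textsc{dp}(c_\ell P_\ell)$ and $(\Pi_1,\dots,\Pi_L)\sim\textsc{dirichlet}(c_1,\dots,c_L)$ independent of the $\tilde{q}_\ell$, is itself a single Dirichlet process with base measure $\sum_\ell c_\ell P_\ell$, of total mass $\alpha=\sum_\ell c_\ell$. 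This is exactly why the statement is really a property of the \textsc{e-fdp} and not an artefact of the approximation.

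I would establish this identity through the finite-dimensional characterization of the \textsc{dp}: a random probability measure on $(\mathbb{F},\mathscr{F})$ is distributed as $\textsc{dp}(\sum_\ell c_\ell P_\ell)$ if and only if, for every finite measurable partition $A_1,\dots,A_k$ of $\mathbb{F}$, the vector $(\tilde{p}_\infty(A_1),\dots,\tilde{p}_\infty(A_k))$ follows a $\textsc{dirichlet}(\sum_\ell c_\ell P_\ell(A_1),\dots,\sum_\ell c_\ell P_\ell(A_k))$. Fixing such a partition, I would realize each building block through its gamma representation: take mutually independent $\gamma_{\ell j}\sim\textsc{ga}(c_\ell P_\ell(A_j),1)$ and set $\tilde{q}_\ell(A_j)=\gamma_{\ell j}/\delta_\ell$ with $\delta_\ell=\sum_{j}\gamma_{\ell j}$, so that $(\tilde{q}_\ell(A_1),\dots,\tilde{q}_\ell(A_k))$ is $\textsc{dirichlet}(c_\ell P_\ell(A_1),\dots,c_\ell P_\ell(A_k))$ as demanded by the defining \textsc{dp} property.

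The crucial step is to observe that $\delta_\ell\sim\textsc{ga}(c_\ell,1)$, since $\sum_j P_\ell(A_j)=1$, and that $\delta_\ell$ is independent of the normalized vector $(\tilde{q}_\ell(A_j))_j$. Under the hypothesis $\alpha_\ell=c_\ell$, the normalized totals $(\delta_\ell/\sum_{\ell'}\delta_{\ell'})_\ell$ are then exactly $\textsc{dirichlet}(\alpha_1,\dots,\alpha_L)$ and independent of the $\tilde{q}_\ell$, so they may be identified in law with $\bm{\Pi}=(\Pi_1,\dots,\Pi_L)$. Substituting telescopes the weights, $\Pi_\ell\tilde{q}_\ell(A_j)=\gamma_{\ell j}/\sum_{\ell'}\delta_{\ell'}$, whence $\tilde{p}_\infty(A_j)=\big(\sum_\ell\gamma_{\ell j}\big)/\big(\sum_{\ell',j'}\gamma_{\ell'j'}\big)$; since $\sum_\ell\gamma_{\ell j}\sim\textsc{ga}(\sum_\ell c_\ell P_\ell(A_j),1)$ are independent across $j$, normalizing gives the required Dirichlet law, and arbitrariness of the partition concludes the argument. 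I expect the main obstacle to be precisely this coupling: one must generate the within-class vectors and the across-class weights $\bm{\Pi}$ from a single common pool of independent gammas, and the coupling is legitimate only because $\alpha_\ell=c_\ell$ forces the gamma shape of each $\delta_\ell$ to coincide with the Dirichlet parameter of $\Pi_\ell$. An equivalent, more structural route to the same conclusion is the superposition property of gamma completely random measures: if $\tilde{\mu}_\ell$ are independent gamma random measures with bases $c_\ell P_\ell$, then $\sum_\ell\tilde{\mu}_\ell$ is again a gamma random measure with base $\sum_\ell c_\ell P_\ell$, and its normalization reproduces $\sum_\ell\Pi_\ell\tilde{q}_\ell$.
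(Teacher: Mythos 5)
Your proposal is correct and follows essentially the same route as the paper: both reduce the claim to the finite-dimensional characterization of the Dirichlet process over an arbitrary measurable partition, and then use the fact that a Dirichlet-weighted mixture of independent Dirichlet vectors is again Dirichlet when the weights' parameters $(\alpha_1,\dots,\alpha_L)$ match the components' total masses $(c_1,\dots,c_L)$ --- exactly where the hypothesis $\alpha_\ell = c_\ell$ enters. The only difference is one of detail: the paper dispatches this last step as a ``well-known property of the Dirichlet distribution,'' whereas you prove it explicitly through the gamma representation and the coupling of $\bm{\Pi}$ with the normalized gamma totals, which is precisely the standard proof of that property.
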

\begin{proof} The proof rely on the finite-dimensional characterization of the Dirichlet process \citep{Ferguson1973}. Specifically,  for any finite partition $B_1,\dots,B_d \in \mathscr{F}$ we have 
\begin{equation*}
\{\tilde{q}_\ell(B_1),\dots,\tilde{q}_\ell(B_d)\} \overset{\textup{ind}}{\sim} \textsc{dirichlet}\{\alpha_\ell P_\ell(B_1),\dots, \alpha_\ell P_\ell(B_d)\},\quad \ell=1,\dots,L.
\end{equation*} 
Note that $\{\tilde{p}_\infty(B_1),\dots,\tilde{p}_\infty(B_d)\} = \sum_{\ell=1}^L \Pi_\ell \{\tilde{q}_\ell(B_1),\dots,\tilde{q}_\ell(B_d)\}$, and
\begin{equation*}
\{\tilde{p}_\infty(B_1),\dots,\tilde{p}_\infty(B_d)\} \sim \textsc{dirichlet}\left\{ \sum_{\ell=1}^L \alpha_\ell P_\ell(B_1),\dots,  \sum_{\ell=1}^L \alpha_\ell P_\ell(B_d)\right\},
\end{equation*} 
thanks to well-know properties of the Dirichlet distribution. 
\end{proof}

\subsection{Enriched P\'olya urn scheme}\label{sec:31}

Similar to \citet{Blackwell1973} in the Dirichlet process case, our \textsc{e-fdmp} is characterized by a P\'olya urn scheme, whose description greatly facilitates the understanding of the underlying clustering mechanism. Conditionally on the latent class indicators $F_1,\dots,F_n$, our enriched formulation reduces to a collection of Dirichlet multinomial processes. Recalling equation~\eqref{enriched_model}, we can rewrite the \textsc{e-fdmp} as follows
\begin{equation*}
\tilde{p}= \sum_{\ell = 1}^L \Pi_\ell \tilde{p}_{H_\ell}, \qquad \tilde{p}_{H_\ell} = \sum_{h=1}^{H_\ell} \pi_{\ell h}\delta_{\theta_{\ell h}(t)}.
\end{equation*}
Then, we can augment the above specification by including the set of latent class indicators $\bm{F} = (F_1,\dots,F_n)$. In this hierarchical representation, the functions  belonging the same class $f_i : i \in \mathcal{I}_\ell$ with $\mathcal{I}_\ell = \{i =1,\dots,n : F_i = \ell\}$ are iid draws from $\tilde{p}_{H_\ell}$, a Dirichlet multinomial process. More precisely, we can equivalently represent our \textsc{e-fdmp} hierarchically as
\begin{equation*}\label{conditionalF}
\begin{aligned}
(F_i \mid \bm{\Pi} ) &\overset{\text{iid}}{\sim} \textsc{multinom}(\Pi_1,\dots,\Pi_L), \qquad &&i=1,\dots,n, \\
(f_i \mid F_i = \ell, \tilde{p}_{H_\ell}) &\overset{\text{iid}}{\sim} \tilde{p}_{H_\ell}, \qquad &&i \in \mathcal{I}_\ell
\end{aligned}
\end{equation*}
 with prior distributions as in equations~\eqref{functional_class}-\eqref{within_class}. Such a hierarchical representation naturally leads to the definition of a sequential mechanism for generating both $f_1,\dots,f_n$ and $F_1,\dots,F_n$. Let $n_\ell = \sum_{i=1}^n I(F_i = \ell)$ be the number of elements belonging to the $\ell$th functional class and let $k_\ell \le n_\ell$ be the number of distinct values observed among the functions of the $\ell$th class. Moreover, let $f^*_{11},\dots,f^*_{1n_1},\dots,f^*_{L1},\dots,f^*_{Ln_L}$ represent the distinct values observed in the whole sample $\bm{f} = (f_1,\dots,f_n)$, having frequencies $n_{j\ell}$ for $j=1,\dots,k_\ell$ and $\ell=1,\dots,L$, so that $n_\ell = \sum_{j=1}^{k_\ell} n_{j\ell}$ and $n = \sum_{\ell=1}^L n_\ell$. Then, the enriched P\'olya urn scheme is characterized by the following two steps, so that for any $n\ge 1$ and any $A \in \mathscr{F}$ we have 
\begin{equation*}\label{enriched_polya}
\begin{aligned}
\mathbb{P}(F_{n+1} = \ell \mid \bm{F}) &= \frac{\alpha_\ell + n_\ell}{\alpha + n}, \qquad \ell=1,\dots,L,\\
\mathbb{P}(f_{n+1} \in A \mid \bm{f}, \bm{F}, F_{n+1} = \ell) &= \left(1 - \frac{k_\ell}{H_\ell}\right) \frac{c_\ell}{c_\ell + n_\ell}P_\ell(A) + \sum_{j=1}^{k_\ell}\frac{n_{j\ell} + c_\ell/H_\ell}{c_\ell + n_\ell} \delta_{f^*_{j\ell}}(A).
\end{aligned}
\end{equation*}
At the first step, one draws the $F_{n+1}$ functional class indicator with a probability depending on the observed frequencies $n_1,\dots,n_L$ and the $\alpha_1,\dots,\alpha_L$ coefficients, which can be  naturally interpreted as a priori frequencies. Then, at the second step and given $F_{n+1} = \ell$, one either draw a novel functional observation from $P_\ell$ or she samples one of the previously observed functions with probability proportional to $n_{j\ell} + c_\ell/H_\ell$. On the light of Theorem~\ref{teo1}, it is not surprising that the second step converges to the classical scheme of \citet{Blackwell1973} as $H_\ell \rightarrow \infty$, conditionally on the $\ell$th functional class. Moreover, if $\alpha_\ell = c_\ell$ the classical P\'olya urn scheme is recovered also marginally, a consequence of Corollary~\ref{corol1}. Furthermore, such an enriched P\'olya urn scheme is reminiscent of the one presented in \citet{Wade2011}, and indeed it can be essentially regarded as its finite-dimensional counterpart. 

Let us focus on the conditional probability of obtaining a new cluster, given the functions $\bm{f}$ and the class indicators $\bm{F}$. From the enriched P\'olya urn scheme one can easily get
\begin{equation}\label{predictive}
\mathbb{P}(f_{n+1} = \text{``new"} \mid \bm{f}, \bm{F}) = \sum_{\ell=1}^L \frac{\alpha_\ell + n_\ell}{\alpha + n}\ \left(1 - \frac{k_\ell}{H_\ell}\right) \frac{c_\ell}{c_\ell + n_\ell}.
\end{equation}
The above predictive probability provides a clear guidance about the role of the hyperparameters. In first place, note that the probability of drawing a new function decreases the more clusters $k_\ell$ we observe, and it equals zero whenever $k_\ell = H_\ell$. Hence, the \textsc{e-fdmp} penalizes partitions with a large number of clusters, effectively bounding the model complexity, one of the overarching goals of our analysis. Note that as $H_\ell \rightarrow \infty$ the aforementioned penalization disappears. Moreover, the parameters $c_\ell$ control the creation of a new cluster---the larger each $c_\ell$ the more cluster we should expect. 

\section{Posterior computations}\label{sec:4}

Bayesian mixture models are routinely estimated using Markov chain Monte Carlo (\textsc{mcmc}). While this approach is supported by strong theoretical guarantees, it has some drawbacks when performing clustering. The first concern is scalability: \textsc{mcmc} sampling might face computational bottlenecks when the sample size grows. This is a severe limitation because in practice one would like to conduct the clustering algorithm on a weekly basis, and perhaps on several different datasets. In addition, a further difficulty arises when performing clustering with \textsc{mcmc}. As discussed  in \citet{LaGr2007}, at each step of the chain one samples a different partition of the observations; however, it is hard to provide a point estimate, essentially because of the label switching phenomenon. Existing solutions rely either on ad-hoc procedures \citep{Medvedovic2002}, or on post-process optimizations problems \citep{LaGr2007, Fric2009, Wade2018}. In both cases, this implies an additional layer of difficulty that one might want to avoid.

To address these issues we employ a mean-field variational approximation of the posterior distribution, which is nowadays a standard choice in several fields \citep{Blei2017}. The involved computations are much faster than \textsc{mcmc}, and the variational Bayes (\textsc{vb}) approach is particularly well suited for clustering purposes, since it is not affected by label switching, thus ruling out the aforementioned additional steps. In addition, variational inference for the \textsc{e-fdmp} is straightforward to implement because such a model belongs to the conditionally conjugate exponential family, for which efficient optimization algorithms are available \citep{Blei2017}. Unfortunately, these advantages do not come without some drawbacks: indeed, the variational posterior is often a crude approximation of the proper posterior law, and it is well known that \textsc{vb} generally leads to accurate point estimates but also it typically underestimates the variability. If uncertainty quantification were of interest, a Gibbs sampling algorithm for the \textsc{e-fdmp} could be easily devised, since the full conditional distributions are be available in closed form. However, in our motivating application we are only interested in a single cluster solution and therefore \textsc{vb} represents an appealing choice. 

Let $\bm{\pi} = (\bm{\pi}_1,\dots, \bm{\pi}_L)$ be the collection of the within-class probabilities of equation~\eqref{within_class} and let $\bm{\beta} = (\bm{\beta}_{11},\dots,\bm{\beta}_{1 H_1}, \dots, \bm{\beta}_{L 1}, \dots, \bm{\beta}_{L H_L})$ be the set of regression coefficients appearing in equation \eqref{linear_beta}. We seek a variational distribution $q(\bm{G},\bm{\Pi},\bm{\pi}, \bm{\beta},\sigma^2)$ that best approximates the joint posterior, while maintaining simple computations. This can be obtained by minimizing the Kullback-Leibler divergence between the variational distribution and the full posterior, or equivalently by maximizing the so-called evidence lower bound (\textsc{elbo}); see \citet{Blei2017}. Without further restrictions, the Kullback-Leibler divergence is minimized when the variational distribution is equal to the true posterior distribution, which is analytically intractable. Hence, a common strategy  is to assume that the variational distribution belongs to a mean-field family. Such a class of distributions incorporate a posteriori independence among distinct groups of parameters, meaning that the variational distribution factorizes as
\begin{equation*}
q(\bm{G},\bm{\Pi},\bm{\pi}, \bm{\beta},\sigma^2) = q(\sigma^2)\prod_{i=1}^n q(G_i) q(\bm{\Pi}) \prod_{\ell=1}^L q(\bm{\pi}_\ell) \prod_{\ell=1}^L\prod_{h=1}^{H_\ell}q(\bm{\beta}_{\ell h}) .
\end{equation*}
Under such an assumption, the optimal variational distributions can be found exploiting an iterative algorithm called coordinate ascent variational inference (\textsc{cavi}). Its full derivation entails standard calculations  which are omitted for the sake of the exposition; we report in Algorithm~\ref{algo1} only the resulting \textsc{cavi} algorithm. One may refer to \citet[Chap. 10,][]{Bishop2006} for detailed illustrations on similar models.  

We define here some additional notation necessary for the description of the \textsc{cavi}~Algorithm~\ref{algo1}. As mentioned in Section~\ref{sec:2}, recall that each functional observation $y_i(t)$ is only available on a finite grid of points $\bm{t}_i = (t_{i1},\dots,t_{iT_i})^\intercal$. The observed values associated to these time grids are stacked into a single $\sum_{i=1}^n T_i$-dimensional vector 
\begin{equation*}
\bm{y} = (y_1(t_{11}),\dots, y_1(t_{1T_1}),\dots, y_n(t_{n1}),\dots, y_n(t_{n T_n}))^\intercal.
\end{equation*}
Similarly, we define the $\sum_{i=1}^n T_i \times M_\ell$ matrices $\bm{B}_\ell$ for $\ell=1,\dots,L$, which are paired to the data $\bm{y}$ and whose entries are the values of the basis functions $\mathcal{B}_m(t_{is})$ of equation~\eqref{linear_beta}, for $m=1,\dots,M_\ell$ over the columns and for $s=1,\dots,T_i$ and $i=1,\dots,n$ over the rows. Moreover, note that in Algorithm~\ref{algo1} the density functions are identified by the same symbols that are used to characterize distributions. Finally, the expected values appearing in Algorithm~\ref{algo1} are taken with respect to the variational distributions $q(\cdot)$ at the $r$th step of the cycle, motivating the notation $\mathbb{E}_q$.

From the output of the \textsc{cavi} algorithm, it is straightforward to derive a posteriori variational estimates for the cluster memberships $G_1,\dots,G_n$, for the class-specific membership $F_1,\dots,F_n$, and for the cluster-specific trajectories $\theta_{\ell h}(t)$.  A natural variational Bayes estimate $\hat{G}_1,\dots,\hat{G}_n$ for the cluster memberships is given by
\begin{equation*}
\hat{G}_i = \arg \max_{\ell, h} \rho_{i \ell h} =  \arg \max_{\ell, h} q\{G_i = (\ell, h)\},\qquad i=1,\dots,n,
\end{equation*}
 and similarly a variational estimate $\hat{F}_1,\dots,\hat{F}_n$ for the functional classes is
 \begin{equation*}
\hat{F}_i = \arg \max_{\ell} \sum_{h=1}^{H_\ell} \rho_{i \ell h} =  \arg \max_{\ell} q(F_i = \ell), \qquad i=1,\dots,n.
 \end{equation*}
These natural estimators can not be easily computed when performing \textsc{mcmc} because of the label-switching phenomenon. Finally, an estimate $\hat{\theta}_{\ell h}(t)$ for the cluster-specific functions is given by its variational expectation, which equals
\begin{equation*}
\hat{\theta}_{\ell h}(t) = \mathbb{E}_q\{\theta_{\ell h}(t)\} = \sum_{m=1}^{M_\ell} \mathcal{B}_{m \ell}(t) \mathbb{E}_q(\beta_{m \ell h}) = \sum_{m=1}^{M_\ell} \mathcal{B}_{m \ell}(t) \tilde{\mu}_{m \ell h},
\end{equation*} 
 where the vector of means $\tilde{\bm{\mu}}_{\ell h} = (\tilde{\mu}_{1\ell h},\dots, \tilde{\mu}_{M_\ell \ell h})^\intercal$ is the same obtained at Step~4 of Algorithm~\ref{algo1}. The estimate $\hat{\theta}_{\ell h}(t)$ could be useful for the interpretation of the clusters as well as for model checking. 
 
 \begin{algorithm*}[tp]
\caption{\textsc{cavi} algorithm for the \textsc{e-fdmp} \label{algo1}} 

\small{
\Begin{

\vspace{3pt}
Let $q(\cdot)$ denote the generic variational distribution at iteration $r$ and let $\mathbb{E}_q$ denote the expected value taken with respect to it. At every step of the algorithm, update each block of $q(\cdot)$ according to the following steps:

\vspace{3pt} 
{\bf [1]} Update $q(G_i)$ for each $i=1, \ldots, n$;

\vspace{3pt}
\For(){$i$ from $1$ to $n$}{

\vspace{3pt}
Update the variational probabilities $q\{G_i = (\ell, h)\} = \rho_{i \ell h}$ according to
\begin{equation*}
\begin{aligned}
\rho_{i \ell h} &\propto \exp{\left[\mathbb{E}_q\{\log{(\Pi_\ell \pi_{\ell h})}\} + \sum_{s=1}^{T_i}\mathbb{E}_q\{\log{\mathcal{N}(y_i(t_{is}); \theta_{\ell h}(t_{is}), \sigma^2)}\} \right]}, \\
&\propto \exp{\left(\mathbb{E}_q\{\log{(\Pi_\ell \pi_{\ell h})}\} - \frac{1}{2}\mathbb{E}_q(\sigma^{-2}) \sum_{s=1}^{T_i}\mathbb{E}_q\left[\left\{y_i(t_{is}) - \theta_{\ell h}(t_{is})\right\}^2\right]\right)},
\end{aligned}
\end{equation*}
for any $h=1,\dots,H_\ell$ and $\ell=1,\dots,L$.
}

\vspace{3pt} 
{\bf [2]} Update the  variational distribution $q(\bm{\Pi})$ according to
\begin{equation*}
q(\bm{\Pi}) = \textsc{dirichlet}\left(\bm{\Pi}; \alpha_1 + \sum_{i=1}^n \sum_{h=1}^{H_1} \rho_{i 1 h},\dots, \alpha_L + \sum_{i=1}^n \sum_{h=1}^{H_L} \rho_{i L h}\right).
\end{equation*}

\vspace{3pt}
{\bf [3]} Update $q(\bm{\pi}_\ell)$ for each $\ell=1, \ldots, L$;

\vspace{3pt}
\For(){$\ell$ \mbox{from} $1$ to $L$}{

\vspace{3pt}
Update the variational distribution of each $q(\bm{\pi}_\ell)$ according to
\begin{equation*}
q(\bm{\pi}_\ell) = \textsc{dirichlet}\left(\bm{\pi}_\ell; \frac{c_\ell}{H_\ell} + \sum_{i=1}^n \rho_{i \ell 1},\dots, \frac{c_\ell}{H_\ell} + \sum_{i=1}^n \rho_{i \ell H_{\ell}}\right).
\end{equation*}
}

\vspace{3pt}
{\bf [4]} Update $q(\bm{\beta}_{\ell h})$ for each $h=1,\dots, H_\ell$ and $\ell=1, \ldots, L$;

\vspace{3pt}
\For(){$\ell$ \mbox{from} $1$ to $L$}{

\vspace{3pt}
	\For(){$h$ \mbox{from} $1$ to $H_\ell$}{
	
	\vspace{3pt}
	Update the variational distribution of each $q(\bm{\beta}_{\ell h})$ according to
	
	\begin{equation*}
	q(\bm{\beta}_{\ell h}) = \mathcal{N}_{M_\ell}\left(\bm{\beta}_{\ell h}; \tilde{\bm{\mu}}_{\ell h}, \tilde{\bm{\Sigma}}_{\ell h}\right),
	\end{equation*}
	where $\tilde{\bm{\Sigma}}_{\ell h} = (\bm{B}_\ell^\intercal \bm{\Gamma}_{\ell h} \bm{B}_\ell + \bm{\Sigma}_\ell^{-1})^{-1}$ and  $\tilde{\bm{\mu}}_{\ell h} = \tilde{\bm{\Sigma}}_{\ell h}(\bm{B}_\ell \bm{\Gamma}_{\ell h} \bm{y} + \bm{\Sigma}_\ell \bm{\mu}_\ell)$, and with $\bm{\Gamma}_{\ell h} = \mathbb{E}_q(\sigma^{-2})\text{diag}(\rho_{1 \ell h},\dots,\rho_{1 \ell h},\dots,\rho_{n \ell h},\dots, \rho_{n \ell h})$.
	}

}
\vspace{3pt} 
{\bf [5]} Update the  variational distribution $q(\sigma^{-2})$ according to
\begin{equation*}
q(\sigma^{-2}) = \textsc{ga}\left(\sigma^{-2}; a_\sigma  + \frac{1}{2}\sum_{i=1}^n T_i,  b_\sigma + \frac{1}{2} \sum_{i=1}^n \sum_{s=1}^{T_i} \sum_{\ell =1}^L \sum_{h=1}^{H_\ell} \rho_{i \ell h} \mathbb{E}_q[\{y_i(t_{is}) - \theta_{\ell h}(t_{is})\}^2] \right).
\end{equation*}
}
}
\end{algorithm*}
 
\section{Simulated illustration}\label{sec:5}

In this section we assess the empirical performance of the \textsc{e-fdmp}---and the associated \textsc{cavi} algorithm---by conducting a simple simulation study. Such a simulation is far from being extensive and it serves mainly as an illustration of the concepts presented in Section~\ref{sec:3}. Specifically, we aim at showing the ability of our model to effectively recover the true number of groups, as well as the cluster memberships, thereby empirically validating the role of each parameter $H_\ell$ as the upper bound for the total number of clusters. 

For this illustrative example, we consider identical and equally spaced time grids $\bm{t}_i = (1/T_i,\dots,T_i/T_i)^\intercal$ for $i=1,\dots,n$, ranging over the unit interval $[0,1]$, and we let the number of observations $n = 100$ and each grid length $T_1 = \cdots = T_n =  50$. Among the functions $f_1,\dots,f_n$ there are only four distinct values $f_1^*,\dots,f^*_4$, defined as
\begin{equation*}
\begin{aligned}
&f^*_1(t) = 1 - 2t, \qquad &&f^*_2(t) = \frac{1}{2}\{\cos(2 \pi t) + \sin(2 \pi t)\}, \\
&f^*_3(t) = 2t^4 -1, \qquad &&f^*_4(t) = \frac{1}{2}\{\cos(4 \pi t) + \sin(4 \pi t)\}.
\end{aligned}
\end{equation*}
The first $f_1,\dots, f_{25}$ functions are set equal to $f^*_1$, while each element of the second block $f_{26},\dots,f_{50}$ is set equal $f^*_2$, and similarly for the third and fourth blocks of functions $f_{51},\dots, f_{75}$ and $f_{76},\dots,f_{100}$, whose elements are equal to $f^*_3$ and $f^*_4$, respectively.  Summarizing, we let the number of cluster be equal to $4$ and we assume that each partition has $25$ elements, for a total of $n=100$ functional observations. Recall that we observe error prone realizations $y_i(t)$ of these functions under Gaussian noise, for $i=1,\dots,n$, as for equation~\eqref{additive}. Clearly, the clustering performance is affected by the amount of noise in the observed  data. To emphasize this aspect we consider two different scenarios. In the first simulated setting, the error variance is relatively small ($\sigma^2 = 0.1^2)$, while in the second scenario the functions are perturbed by a much higher amount ($\sigma = 1.5^2$). The simulated trajectories are depicted in Figure~\ref{fig:sim}: in the first scenario the four functions $f^*_1,\dots,f^*_4$ are clearly distinguishable, whereas in the latter the underlying signal is less evident. Consequently, the clustering algorithm is expected to perform better in the small variance setting than in the high variance one.

\begin{figure}[tp]
\centering
\includegraphics[width=\textwidth]{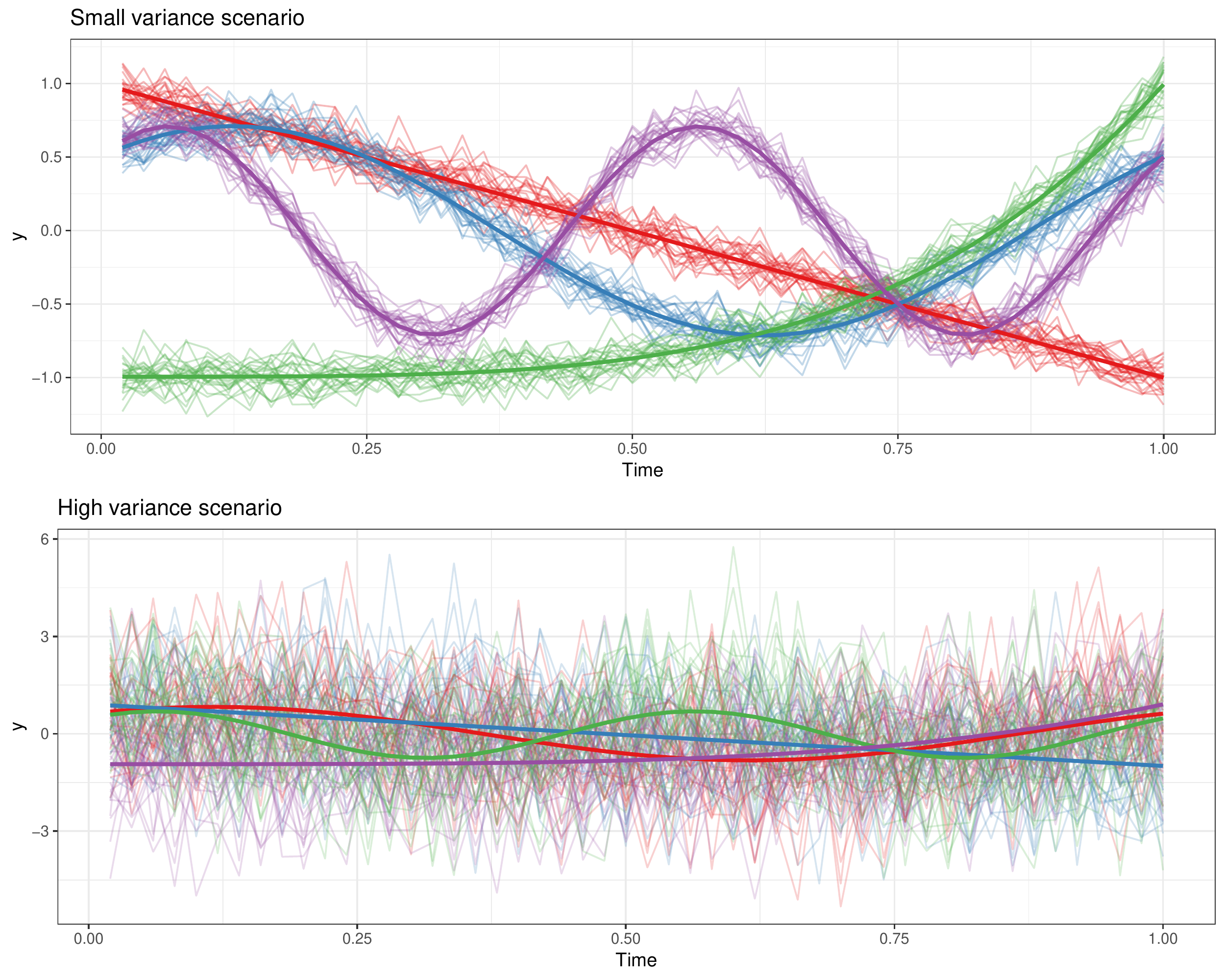}
\caption{Simulated trajectories $y_1(t),\dots,y_n(t)$ in the small variance scenario (top graph, $\sigma^2 = 0.1^2)$, and high variance scenario (bottom graph, $\sigma^2 = 1.5^2$). Different colors refer to the estimated cluster memberships $\hat{G}_1,\dots,\hat{G}_n$ whereas the corresponding solid lines are the estimated cluster-specific functions $\hat{\theta}_{\ell h}(t)$. 
\label{fig:sim}}
\end{figure}

Although the true number of clusters is $4$, we set the total number of mixture components $H = 20$, to empirically demonstrate the ability of the \textsc{e-fdmp} to recover the correct number of distinct functions. Moreover, we let the number of class functions $L = 4$ and each within-class upper bound $H_\ell = 5$ for $\ell=1,\dots,4$. The functional atom specifications and the corresponding basis functions $\mathcal{B}_{m\ell}(t)$, as for equation~\eqref{linear_beta}, are the following
\begin{equation*}
\begin{aligned}
&\theta_{1h}(t) = \beta_{11h} + \beta_{21h}t, \ && \theta_{2h}(t) = \beta_{12h} + \beta_{22h}\cos(2 \pi t) + \beta_{32h}\sin(2 \pi t), \\
&\theta_{3h}(t) = \beta_{13h} + \beta_{23h}t^4, \ &&\theta_{4h}(t) = \beta_{14h} + \beta_{24h}\cos(4 \pi t) + \beta_{34h}\sin(4 \pi t),
\end{aligned}
\end{equation*}
with iid prior distributions $\beta_{m\ell h} \overset{\textup{iid}}{\sim} \mathcal{N}(0,10)$. The prior specification is concluded by setting $\alpha_1 = \cdots = \alpha_L = 1$, $c_1=\cdots=c_L = 1$ and $a_\sigma = b_\sigma = 1$.

The optimization of the \textsc{elbo} might be troublesome due to the presence of local maxima. To mitigate this issue, the \textsc{cavi} algorithm was initialized at several different starting points; the solution achieving the highest value of the \textsc{elbo} was retained \citep{Blei2017}. Remarkably, each run of the \textsc{cavi} required only few seconds for the computations on a standard laptop and with a na\"ive implementation in the R statistical software. The results are depicted in Figure~\ref{fig:sim} for both the scenarios. 

\begin{table}[t]
\begin{subtable}[h]{0.48\textwidth}
\centering
\begin{tabular}{lrrrr}
 \toprule
Class label &  $1$ &  $2$ &  $3$ &  $4$ \\ 
\midrule
Within-class label&  $1$ &  $2$ &  $3$ &  $3$ \\ 
  \midrule
$f^*_1$ &  25 &   0 &   0 &   0 \\ 
$f^*_2$ &   0 &  25 &   0 &   0 \\ 
$f^*_3$ &   0 &   0 &  25 &   0 \\ 
$f^*_4$ &   0 &   0 &   0 &  25 \\ 
   \bottomrule
\end{tabular}
   \caption{Small variance scenario. \label{tab:cluster1}}
\end{subtable}
\hfill
\begin{subtable}[h]{0.48\textwidth}
\centering
\begin{tabular}{lrrrr}
  \toprule
Class label& $1$ &  $2$ &  $3$ & $4$ \\ 
\midrule
Within-class label& $4$ & $2$ &  $2$ &  $4$ \\ 
  \midrule
 $f^*_1$ &  22 &   1 &   0 &   2 \\ 
 $f^*_2$ &  3 &  19 &   1 &   2 \\ 
 $f^*_3$ &   0 &   2 &  23 &   0 \\ 
 $f^*_4$ &  1 &   0 &   0 &  24 \\ 
   \bottomrule
\end{tabular}
  \caption{High variance scenario. \label{tab:cluster2}}
  \end{subtable}
    \caption{Contingency tables showing the true cluster memberships $G_1,\dots,G_n$ against the estimated memberships $\hat{G}_1,\dots,\hat{G}_n$ in the small variance (a) and in the high variance (b) scenarios. The functional class  and the within-class labels are reported. The cluster labels having zero frequencies are omitted.   \label{tab:cluster}}
\end{table}

In the small variance setting (top graph of Figure~\ref{fig:sim}), the \textsc{cavi} algorithm applied to the \textsc{e-fdmp} model performs remarkably well. Indeed, it correctly identifies $4$ clusters---meaning that  among the estimated memberships $\hat{G}_1,\dots,\hat{G}_n$ there are only $4$ distinct values---even though a conservative upper bound $H = 20$ was selected. Moreover, the observed curves are always allocated to the correct cluster, as summarized in Table~\ref{tab:cluster1}, up to a label permutation. Finally, the estimated curves $\hat{\theta}_{\ell h}$ depicted in Figure~\ref{fig:sim} closely resemble the true functions $f^*_1,\dots,f^*_4$.  Similar remarks can be made also in the high variance scenario (bottom graph of Figure~\ref{fig:sim}), although the performance are less striking, as one would expect. In particular, according to Table~\ref{tab:cluster2}  the estimated memberships $\hat{G}_1,\dots,\hat{G}_n$ are correct in the $88\%$ of the cases. However, it should be emphasized that in both cases the correct number of cluster is automatically identified, without the need of a post-processing step. This corroborates the usage of each $H_\ell$ as an upper bound, implying that one should not be worried to overfit the data when selecting large $H$, as long as the $c_1,\dots,c_L$ parameters are well calibrated.

\section{E-commerce application}\label{sec:6}
\subsection{Prior specifications}

Recall that in our motivating application we aim at grouping flight routes according to the searches on the website of the company. From the original dataset at our disposal---concerning only Italian airports---we retained the flight routes having the highest number of searches within the period under consideration. As a result, the final dataset comprises $n = 214$ different flight routes accounting for the $94\%$ of the total counts. Each $y_i(t)$ is observed over a weekly time grid ranging from the 1st March 2017 $(t=1)$ to the 14th March 2018 $(t=55)$, so that each time grid equals $\bm{t}_i = (1,\dots,55)^\intercal$, for $i=1,\dots,n$. Hence, the dataset can be represented as a $214 \times 55$ matrix having $11770$ entries. 

\begin{figure}[tp]
\centering
\includegraphics[width=\textwidth]{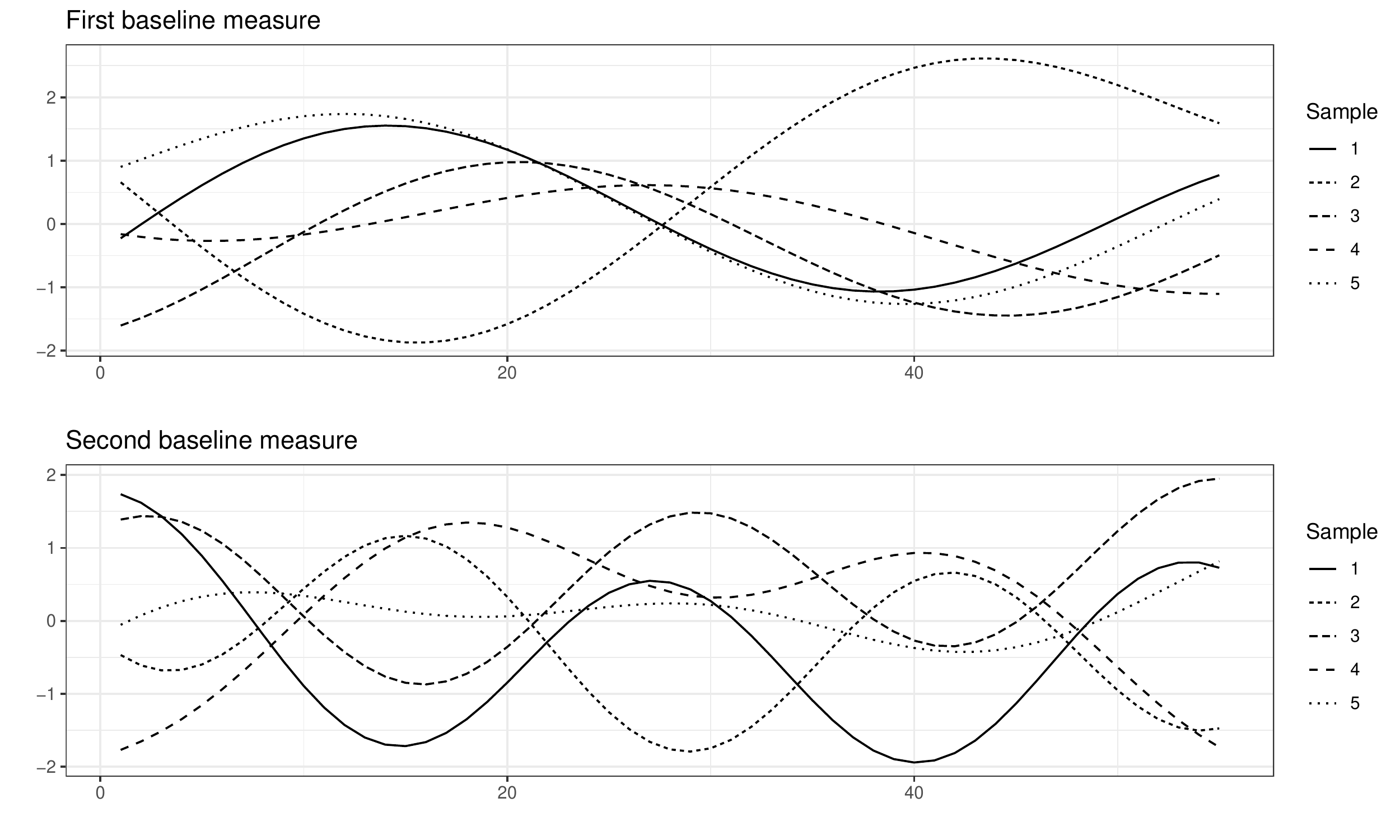}
\caption{Prior samples for the $L=2$ baseline probability measures $P_1$ (top graph) and $P_2$ (bottom graph) according to equations~\eqref{P1}-\eqref{P2}. \label{fig:2}}
\end{figure}

We set the number of functional classes $L = 2$ and we select $P_1$ and $P_2$ so that they have interpretable but yet sufficiently flexible forms. The number of basis functions for the both the functional classes is $M_1 = M_2 = 6$. The first functional class ($\ell=1$) captures yearly cyclical patterns and characterizes the routes having e.g. a peak of web-searches during either the summer or the winter. This is the case for example of the \texttt{MIL-AHO} route---from Milan to Alghero, a small city in Sardinia---as apparent from Figure~\ref{fig:1}. We increase the flexibility of this functional class by including also a semi-parametric component, thus allowing moderate deviations from this cyclical behavior. Specifically, we specialize the basis functions $\mathcal{B}_{m\ell}(t)$ in~\eqref{linear_beta} as follows
\begin{equation}\label{P1}
\theta_{1h}(t) = \sum_{m=1}^4 \beta_{m1h}\mathcal{S}_m(t) +  \beta_{51h}\cos{\left(2\pi \frac{7}{365} t\right)} + \beta_{61h}\sin{\left(2\pi \frac{7}{365} t\right)}, 
\end{equation}
where $\mathcal{S}_1(t),\dots,\mathcal{S}_4(t)$ are deterministic cubic spline basis functions. The second functional class ($\ell=2$) has a mathematical formulation similar to~\eqref{P1}, but with an important practical distinction. In particular, it characterizes functions having two peaks per year, which amounts to let
\begin{equation}\label{P2}
\theta_{2h}(t) = \sum_{m=1}^4 \beta_{m2h}\mathcal{S}_m(t) +\beta_{52h}\cos{\left(2\pi \frac{14}{365} t\right)} + \beta_{62h}\sin{\left(2\pi \frac{14}{365} t\right)}. 
\end{equation}
The \texttt{MIL-NAP} route---from Milan to Naples, depicted in Figure~\ref{fig:1}---is presumably a member of this functional class. As for the prior distributions $\bm{\beta}_{\ell h} \sim \mathcal{N}_{M_\ell}(\bm{\mu}_{\ell},\bm{\Sigma}_{\ell})$, we set the prior means $\bm{\mu}_1 = \bm{\mu}_2 = \bm{0}$ and the covariance matrices $\bm{\Sigma}_1 = \bm{\Sigma}_2$ to be equal and diagonal, having entries $\text{diag}(\bm{\Sigma}_1) = \text{diag}(\bm{\Sigma}_2)= (1,\dots,1)$, which were chosen to induce a fairly uninformative prior, considered that the data were standardized. Few simulated draws from the prior baselines $P_1$ and $P_2$ are shown in Figure~\ref{fig:2}, which confirms that these two functional classes are both sufficiently flexible but distinct.

To induce a priori a moderate amount of clusters we select $c_1 = c_2 = 1$, whereas we specify a uniform prior for functional class probabilities $\bm{\Pi} = (\Pi_1,\Pi_2)$ by letting $\alpha_1 = \alpha_2 = 1$. The latter choice corresponds to the a priori indifference between the two functional classes. Moreover, by virtue of Corollary~\ref{corol1}, it also implies that for $H_\ell$ large enough the \textsc{e-fdmp} is approximately a \textsc{fdp} with baseline measure $\frac{1}{2}(P_1 + P_2)$. Finally, we let $a_\sigma = b_\sigma = 1$ for the residual precision $\sigma^{-2}$, a fairly uninformative setting.

\subsection{Selection of the upper bounds}

The theoretical findings of Section~\ref{sec:3} as well as the simulation study of Section~\ref{sec:5} seem to suggest that each $H_\ell$ should be taken as large possible, being limited only by computational constraints. Indeed, the redundant clusters would be automatically deleted by the shrinkage prior in equation~\eqref{within_class}. Taken to the extreme (i.e.  as each $H_\ell \rightarrow \infty$), this argument would lead to a proper Bayesian nonparametric prior; see Section~\ref{sec:3}. Although such an approach is theoretically sounding, its direct application might be troublesome on certain statistical problems. Indeed, real data are far more heterogeneous than those typically considered in simulations, meaning that the ``true'' number of clusters could be large with respect to the sample size. This effect is particularly marked within the context of functional clustering, because even small local oscillations lead to mathematically distinct functions.  Hence, flexible priors with very large upper bounds---as well as infinite dimensional nonparametric priors---might constitute a better fit for the data, at the price of more complex cluster solutions. The strength of the \textsc{e-fdmp} formulation---especially in comparison with nonparametric priors---is in that one can balance the flexibility and the complexity of the model by tuning the bounds $H_\ell$.

On the basis of the above discussion, we let $H = \sum_{\ell=1}^L H_\ell$ be the largest value for which the resulting clustering solution is still useful in practice. Such a value is evidently quite subjective and it depends on the specific statistical problem. In our e-commerce application---in consultation with the stakeholders of the company---we let the upper bounds $H_1 = 20$ and $H_2 = 5$. Indeed, the second baseline measure is more prone to capture specificities of the functional observations compared to the first one, and this might lead to highly similar clusters. As discussed in the next section, such an effect is present even under the tight choice $H_2 = 5$. Note that the values $H_\ell$ still preserve their interpretation of upper bounds for the within-class number of clusters: if less than $H_\ell$ clusters are needed, then the redundant mixture components will be neglected.

\subsection{Flight routes segmentation}

We run the \textsc{cavi} Algorithm~\ref{algo1} multiple times, starting from different initialization points to mitigate the issue of local maxima. Such a procedure required only few minutes of computations on a standard laptop. From the ouput of the \textsc{cavi} algorithm, we estimate the group memberships $\hat{G}_1,\dots,\hat{G}_n$ as discussed in Section~\ref{sec:4}. In Table~\ref{tab:clusterapp} the frequencies of the resulting clusters are reported. Note that only $14$ clusters are obtained out of $H = 25$ and furthermore some of them are composed only by few functional observations. Moreover, all the $H_2 = 5$ groups of the second functional class are occupied, which suggests that by selecting a larger upper bound one would probably get more clusters. However, this would be of little practical interest because---as evidenced in Figure~\ref{fig:cluster}---these $5$ groups are already highly similar. This is an important practical advantage of the \textsc{e-fdmp} with respect to nonparametric priors, namely the ability of bounding the model complexity by avoiding the exploration of complex and less relevant partition structures. 

\begin{table}[t]
\begin{subtable}[h]{\textwidth}
\centering
\begin{tabular}{lrrrrrrrrr}
  \toprule
Within-class label & 2 & 3 & 5 & 6 & 10 & 14 & 16 & 17 & 20 \\ 
  \midrule
Frequency & 8 & 7 & 1 & 2 & 40 & 1 & 4 & 13 & 41 \\
Volume $(\times 10^5)$ & 4.49 & 2.54 & 0.51 & 0.78 & 51.45 & 0.44 & 26.61 & 15.46 & 33.43 \\
   \bottomrule
\end{tabular}
   \caption{First functional class ($\ell=1$). \label{tab:clusterapp1}}
\end{subtable}
\hfill
\begin{subtable}[h]{\textwidth}
\centering
\begin{tabular}{lrrrrr}
  \toprule
Within-class label & 1 & 2 & 3 & 4 & 5 \\ 
  \midrule
Frequency    & 27 & 9 & 28 & 21 & 12 \\ 
Volume $(\times 10^5)$ & 35.24 & 8.27 & 23.93 & 26.96 & 16.16 \\  
   \bottomrule
\end{tabular}
  \caption{Second functional class ($\ell=2$). \label{tab:clusterapp2}}
  \end{subtable}
    \caption{For both the functional classes $\ell=1$ and $\ell=2$ the frequencies of the estimated clusters, as well as the traffic volumes associated to these groups, are reported. The traffic volumes represent the summation of the within-cluster number of web-searches over the period of consideration. The cluster labels having zero frequencies are omitted. \label{tab:clusterapp}}
\end{table}

\begin{figure}[tp]
\centering
\includegraphics[width=\textwidth]{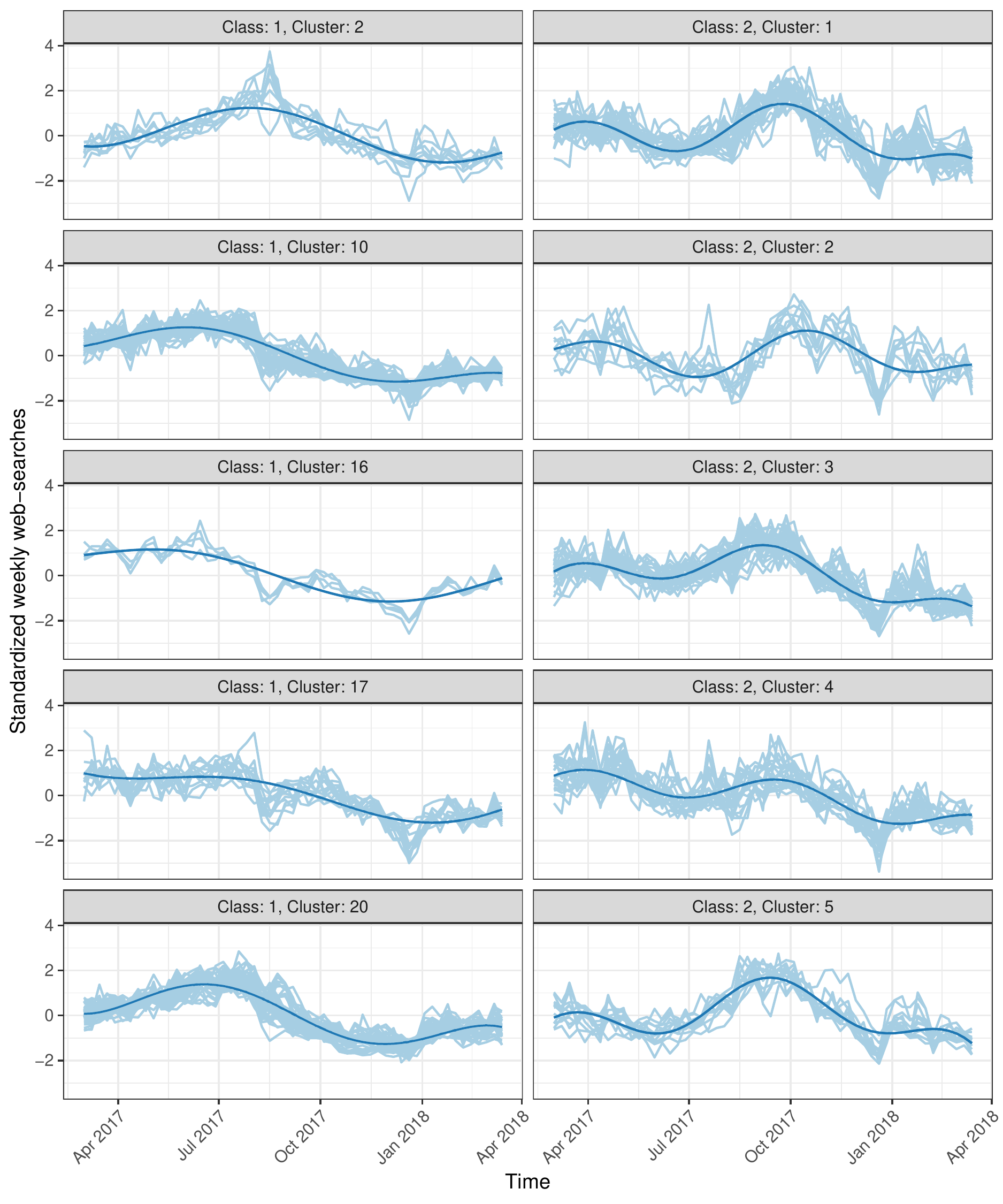}
\caption{The standardized functional observations $y_i(t)$ of the $10$ most relevant clusters (according to the volumes of Table~\ref{tab:clusterapp}) are depicted. The solid dark lines represent the associated cluster-specific estimated trajectories $\hat{\theta}_{\ell h}(t)$. \label{fig:cluster}}
\end{figure}

Together with the cluster frequencies, we report in Table~\ref{tab:clusterapp} also the traffic volumes associated to these groups, namely the within-cluster summation of the number of web-searches. Such a metric is far more important than the cluster frequencies: for example, cluster $16$ of class $1$---which has only $4$ observations and a sensible traffic volume---is much more relevant from a business perspective than cluster $3$ of class $1$. Unsurprisingly, cluster $16$ of class $1$ identifies flights from the cities Milan and Bologna to Palermo and Catania, whose airports are among the biggest in Italy. 

In Figure~\ref{fig:cluster} we depict the raw standardized observations $y_i(t)$ of the $10$ most relevant clusters---i.e. those having the highest traffic volumes---overlaid with the corresponding estimated curves $\hat{\theta}_{\ell h}(t)$. A direct graphical inspection confirms that the baseline specifications of equations~\eqref{P1}-\eqref{P2} are indeed flexible enough to capture the main tendencies of the data. Moreover, the differences between the two functional classes are evident also a posteriori: indeed, the clusters of the first column in Figure~\ref{fig:cluster} are characterized by single peaked functions, while the other groups display two-peaked functions.  

As previously mentioned, the clusters of the second functional class are mathematically different but quite similar, since all the corresponding functions have a first peak around April and a second one between September and October. Between functional classes, and within the first functional class, however, there is much more heterogeneity.  For instance, the functions belonging to cluster $2$ of class $1$ have a single peak in August, while those belonging to clusters $10$ and $20$ of class $1$ have a single peak between June and July. Moreover, functions of cluster $17$, class $1$, are quite stationary at the beginning and then they drop around August.

\begin{table}[t]
\begin{subtable}[h]{\textwidth}
\centering
\begin{tabular}{llrrr}
\toprule
& & \multicolumn{3}{c}{\emph{\textbf{Arrival}}}\\
& & North & Center & South \& Islands \\ 
\midrule
&  North &   0 &   2 &  49 \\ 
\emph{\textbf{Departure}}&  Center &   0 &   0 &  24 \\ 
&  South \& Islands &   6 &   3 &  12 \\ 
\bottomrule
\end{tabular}
   \caption{Macro cluster A. Labels $\{10,20\}$ of the first functional class $(\ell=1)$. \label{tab:transition1}}
\end{subtable}
\hfill
\begin{subtable}[h]{\textwidth}
\centering
\begin{tabular}{llrrr}
\toprule
& & \multicolumn{3}{c}{\emph{\textbf{Arrival}}}\\
& & North & Center & South \& Islands \\ 
\midrule
&North &   0 &   7 &   6 \\ 
\emph{\textbf{Departure}}&    Center &  10 &   0 &   0 \\ 
&  South \& Islands &  47 &  21 &   7 \\ 
\bottomrule
\end{tabular}
  \caption{Macro cluster B. Labels $\{1,\dots,5\}$ of the second functional class $(\ell=2)$. \label{tab:transition2}}
  \end{subtable}
    \caption{Contingency tables for the regions associated to the departure and arrival airports, for the flight routes belonging to macro clusters A and B. 
    \label{tab:transition}}
\end{table}

We now investigate in more detail the features of clusters $10$ and $20$ of the first functional class, termed henceforth macro cluster A, as well as those of the second functional class, which we will call macro cluster B. Indeed, these macro clusters are fairly homogeneous and they are also characterized by the highest traffic volumes. Recall that the airports of our dataset are located in Italy, which can be conveniently divided in three areas (North, Center and South \& Islands), following standard administrative divisions. Arrival and departure airports of the flight routes belong to one of these areas. Remarkably, both the macro clusters A and B can be well described in terms of these administrative borders, as it is apparent from Table~\ref{tab:transition}. In particular, the vast majority of flight routes belonging to macro cluster A arrive to an airport located in the South \& Island region. Conversely, in the macro cluster B most of the flight routes depart from the South \& Islands area and are directed to the North and to the Center regions. These findings further corroborate the quality of the obtained cluster solution and they provide useful intuitions about the role of each cluster. Indeed, these qualitative descriptions might help marketing specialists in designing effective cluster-specific policies. 

\section*{Acknowledgements}The author is grateful to Gianluca Barbierato and Bruno Scarpa for their helpful comments on a first version of this manuscript.

\bibliographystyle{chicago}
\bibliography{biblio}

\end{document}